\theoremstyle{plain}
\newtheorem{theorem}{Theorem}
\newtheorem{assumption}{Assumption}
\newtheorem{lemma}{Lemma}
\theoremstyle{definition}
\newtheorem{remark}{Remark}
\newtheorem{example}{Example}
\newcommand{\indep}{\perp \!\!\! \perp}
\newcommand{\X}{{\bf X}}
\newcommand{\x}{{\bm{x}}}
\newcommand{\Y}{{\bf Y}}
\newcommand{\B}{{\bf B}}
\newcommand{\D}{{\bf D}}
\newcommand{\tX}{\tilde{\bf X}}
\newcommand{\tY}{\tilde{Y}}
\newcommand{\spc}{{\mathcal S}_{Y|\X}}
\newcommand{\cms}{{\mathcal S}_{E(Y|\X)}}
\newcommand{\R}{\mathbb{R}}
\newcommand{\mS}{\mathcal{S}}
\newcommand{\A}{{\bf A}}
\newcommand{\M}{{\bf M}}
\newcommand{\V}{{\bf V}}
\newcommand{\bP}{{\bf P}}
\newcommand{\var}{\text{Var}}
\newcommand{\Sig}{\bm{\Sigma}}
\newcommand{\cov}{\text{Cov}}
\newcommand{\bOmega}{\bm{\Omega}}
\newcommand{\I}{{\bf I}}
\DeclareMathOperator*{\argmin}{arg\,min}
\DeclareMathOperator{\E}{E}
\newcommand{\blind}{0}
\begin{document}

\def\spacingset#1{\renewcommand{\baselinestretch}%
{#1}\small\normalsize} \spacingset{1}


\if0\blind
{
  \title{\bf Detecting influential observations in single-index Fr\'echet regression}
  \author{Abdul-Nasah, Soale  \thanks{
    Corresponding Author: \textit{abdul-nasah.soale@case.edu}}\hspace{.2cm} \\
  Department of Mathematics, Applied Mathematics, and Statistics, \\
  Case Western Reserve University, Cleveland, OH, USA
 }
    \maketitle
} \fi

\if1\blind
{
  \bigskip
  \bigskip
  \bigskip
  \begin{center}
    {\LARGE\bf Title}
\end{center}
  \medskip
} \fi

\bigskip
\begin{abstract}
Regression with random data objects is becoming increasingly common in modern data analysis. Unfortunately, this novel regression method is not immune to the trouble caused by unusual observations. A metric Cook's distance extending the original Cook's distances of \cite{cook1977detection} to regression between metric-valued response objects and Euclidean predictors is proposed. The performance of the metric Cook's distance is demonstrated in regression across four different response spaces in an extensive experimental study. Two real data applications involving the analyses of distributions of COVID-19 transmission in the State of Texas and the analyses of the structural brain connectivity networks are provided to illustrate the utility of the proposed method in practice.
\end{abstract}

\noindent%
{\it Keywords:}  Metric Cook's Distance, Influential observations, Anomalous behavior, Fr\'echet regression
\vfill

\newpage
\spacingset{2} 

\section{Introduction\label{sec:1}}
\noindent Regression is one of the most powerful statistical methods used to investigate the relationship between quantities. Over the years, the methodology has been extended to incorporate quantities of various forms such as count measurements, categorical variables, and functional data. Recently, there has also been a growing interest in regression involving random objects such as shapes, networks, probability distributions, covariance matrices, spheres, and other complex objects. For some examples, see \citet{faraway2014regression, petersen2019frechet}, and \cite{zhang2023dimension}. 

While the flexibility of regression methods have allowed them to be adopted in many fields, the accuracy of most regression methods, especially those focused on estimating the conditional mean, suffer when unusual observations or outliers are present in the data. Such unusual occurrences are easy to spot from a scatter plot when dealing with scalar observations. For instance, looking at Figure \ref{fig:denswgt}, it is obvious that Subject 50 has an unusually large weight with a brain network that is structurally less dense. However, when dealing with networks, distributions, and other complex data types, it may be difficult to discern such unusual occurrence by simple visualizations because of the complex form of such data objects. For example, if we examine the actual networks for Subjects 1, 2, and 50 in Figure \ref{fig:net3}, it is hard to tell that the brain network of Subject 50 is less dense compared to the other two networks, or that the network for Subject 50 corresponds to a brain network of a person with a high weight. 

\begin{figure}[htb!]
        \centering
            \includegraphics[width=0.8\textwidth,]{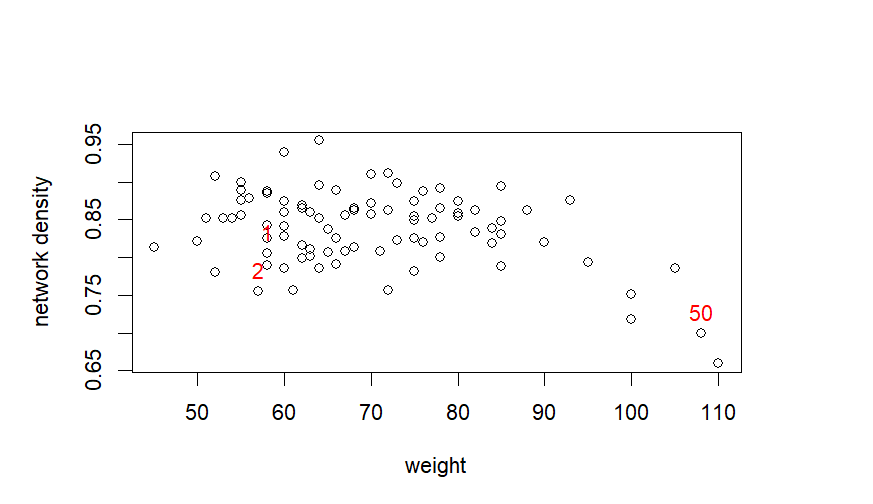}
            \caption{Example showing the scatter plot of structural brain network densities vs weight(kg) of the 88 healthy individuals in a study. The numbers in red represent specific subjects.}
            \label{fig:denswgt}
            \includegraphics[width=0.9\textwidth,]{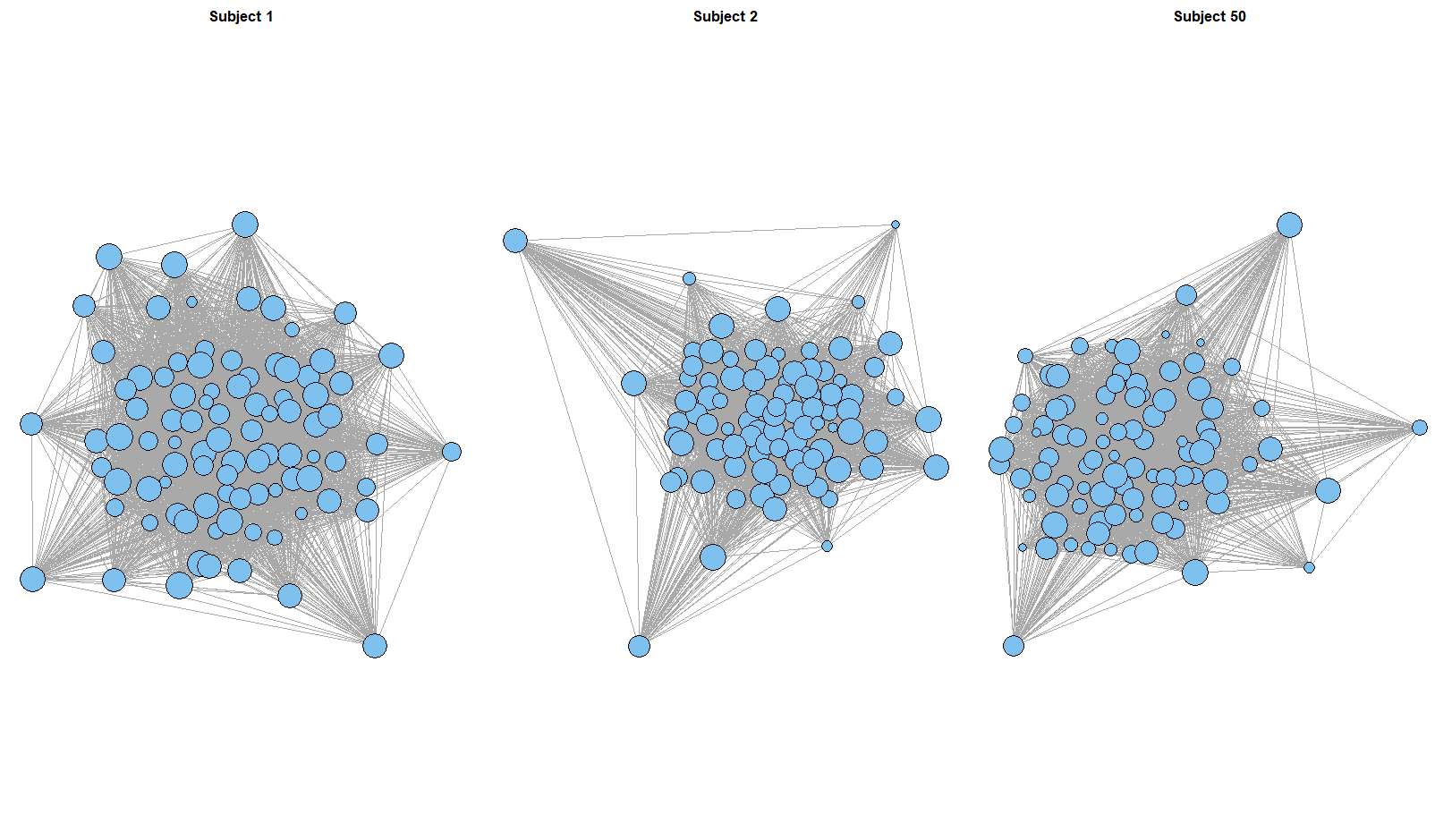}
            \vspace{-0.5in}
        \caption{Example showing structural brain connectivity networks of Subjects 1, 2, and 50 in the study. The blue circles represent 90 cortical regions of interests and the gray lines denote the links between the regions.}
        \label{fig:net3}
\end{figure}

It is important to note that outliers can be very informative, especially about the data collection process and may not necessarily cause problems in the statistical analysis. However, some outliers can distort the regression fit or cause the regression assumptions to be violated. Such outliers are referred to as {\it influential} and they are the focus of this study. In classical linear regression, Cook's distances of \cite{cook1977detection} is one of the popular techniques used to detect influential observations. For detecting influential observations in single index models and anomalous structure in functional data, several methods have been proposed. See \citet{prendergast2008trimming} and \cite{austin2024detection}, respectively. However, to the best of our knowledge there are no such procedures for detecting influential observations or anomalous behavior in regression between random response objects and Euclidean predictors. Given the growing need for Fr\'echet regression, it is important that we investigate how unusual observations can impact the accuracy of our analysis in this novel regression setting. 

Here, we will restrict our attention to mean regression between response objects that lie in bounded metric spaces and Euclidean predictors. Our proposal can be summarized as follows. First, find a Euclidean surrogate of the original response based on the pairwise distances between the response objects. Then fit an ordinary least squares (OLS) regression between the surrogate response and the original predictors. Repeat the procedure after leaving out one observation and calculate the corresponding Cook's distances following the procedure of \cite{cook1977detection} to determine observations that are influential. Therefore, we call our procedure the {\it metric Cook's distance}. We will demonstrate this idea in regression settings involving responses in four different metric spaces. The goal is to first detect observations that are influential and then investigate those observations by examining how trimming them from the data affect the central mean space basis estimates. It is worth noting that the metric Cook's distance also comes in handy in studies where the investigator's goal is to detect anomalies in the data.

The rest of the paper is organized as follows. Section~\ref{sec:2} gives a brief background to regression between random response objects and Euclidean predictors, the model under consideration, and the estimation of the metric Cook's distances. In Section~\ref{sec:3}, we demonstrate the performance of the metric Cook's distance through extensive experimental studies. Two real applications of the metric Cook's distance involving the analyses of COVID-19 transmissions in the State of Texas and the analyses of structural brain connectivity networks of healthy individuals are given in Section~\ref{sec:4}. The concluding remarks are given in Section~\ref{sec:5}. All proofs and technical details are relegated to the appendix.

\section{Brief Background and Methodology \label{sec:2}}
To fix the idea, let $Y$ denote a random response in the metric space $(\Omega_Y,d)$, where the metric $d:\Omega_Y \times \Omega_Y \to \R$. Let $\X$ be a $p$-dimensional predictor. Next, assume a random object $(\X, Y) \in \R^p \times \Omega_Y$ has a joint distribution $F_{XY}$, where $\X$ and $Y$ have marginal distributions $F_X$ and $F_Y$, respectively. If the conditional distributions $F_{Y|X}$ and $F_{X|Y}$ exist and are well-defined, then \cite{petersen2019frechet} defined the conditional Fr\'echet mean of $Y$ given $\X$ as
\begin{align}
    m_{\bigoplus}(\x) = \argmin_{\omega\in \Omega_Y} \E\big( d^2(Y,\omega)|\X=\x \big).
    \label{fre_reg}
\end{align}

Notice that model (\ref{fre_reg}) is analogous to $\E(Y|\X=\x)$, where $Y\in\R$. Thus, Fr\'echet regression shares some of the problems encountered in classical regression  associated with $\X$, such as large $p$ and the effect of outliers. The former motivated the development of dimension reduction for Fr\'echet regression, see \citet{dong2022frechet, zhang2023dimension}, and \cite{soale2023data}. In this study, we focus on the impact of unusual observations on single-index Fr\'echet regression models, which we will present in the Sections that follow. 

\subsection{Model and Dimension Reduction Subspace \label{subsec:model}}
We start by assuming that the relationship between $Y\in\bOmega_Y$ and $\X\in\R^p, p\geq 1$, is given by the single-index model:
\begin{align}
    Y = f(\beta^\top\X, \epsilon),
    \label{sim}
\end{align}
where $f$ is some unknown link function, $\beta\in\R^p$ with $\lVert \beta \rVert_2 = 1$, and $\lVert .\rVert$ denotes the $\ell_2$ norm. $\epsilon$ is the noise with $\E(\epsilon) = 0$ and $\epsilon \indep \X$, where $\indep$ means statistical independence. 

In Model \ref{sim}, $Y$ depends on $\X$ only through $\beta^\top\X$. Therefore, our goal is to find $\beta$ without imposing stringent assumptions on $f$, which may not even have a conceivable form. Since our focus is on mean regression, we want a $\beta$ that satisfies 
\begin{align}
    Y \indep \E(Y|\X) | \beta^\top\X, \text{ or equivalently, } \E(Y|\X) = \E(Y|\beta^\top\X).
    \label{cms}
\end{align}
Clearly from (\ref{cms}), we see that $\beta$ is not unique as any scalar multiple of $\beta$ also satisfies (\ref{cms}). Luckily under some mild conditions, the intersection of all such $\beta$ is unique, and if it exists, it is called the central mean space for the regression of $Y$ on $\X$, denoted as $\cms = \mathrm{span}(\beta)$. See \cite{cook1996graphics} and \cite{yin2008successive} for the existence and uniqueness of the central mean space. 

In the sufficient dimension reduction literature, several estimators of the central mean space have been proposed. One of the most popular methods is the ordinary least squares (OLS; \cite{li1989regression}). The OLS method is a powerful central mean space estimator in single index models with monotone link functions and a key reason for its popularity is its ease of implementation. However, OLS is susceptible to unusual observations, which also makes it a useful tool for detecting influential observations.

\subsection{Surrogate Ordinary Least Squares\label{subsec:sOLS}}
We consider a general response $Y\in (\bOmega_Y,d)$, which includes both Euclidean and non-Euclidean metric spaces. In non-Euclidean spaces, the usual vector space operations such as inner products are not always guaranteed, which makes it infeasible to implement classical OLS. However, the good news is that in every metric space we can at least compute distances between the elements of $\bOmega_Y$, if the set is bounded. Therefore, to accommodate regression with responses beyond the Euclidean space, we propose a {\it surrogate ordinary least squares}. The surrogate OLS will utilize a surrogate response based on the pairwise distances between the response objects. 

In the existing literature, there are extensions of OLS that accommodate responses in non-Euclidean spaces. These include the Fr\'echet OLS of \cite{zhang2023dimension}, which we will call the kernel Fr\'echet OLS (FOLS) and the surrogate-assisted OLS (sa-OLS) \cite{soale2023data}. However, we propose an alternative surrogate OLS because FOLS requires that the response metric be continuously embeddable in the Hilbert space. This condition restricts our choice of kernels and also involves choosing optimal tuning parameters for the kernel estimation. While sa-OLS does not require the continuous embedding condition, it requires several random projections of the distance matrix, which requires choosing a large number of projection vectors and can be computationally intensive.

Following in the spirit of \cite{faraway2014regression}, we propose a surrogate OLS via multidimensional scaling. The Euclidean surrogate of $Y\in (\bOmega_Y,d)$ obtained via multidimensional scaling preserves the distances between the original response objects. Like the classical OLS and its extensions, our surrogate OLS method also makes the following standard assumptions.  

\begin{assumption}\label{monotonicity}
We assume the function $f$ is monotone.    
\end{assumption}

\begin{assumption}\label{LCM}
We assume that $\E(\X|\beta^\top\X)$ is a linear function of $\beta^\top\X$. 
\end{assumption}

\noindent Assumption \ref{monotonicity} is needed because the OLS method is known to perform poorly when the link function is symmetric. Assumption \ref{LCM} is called the linear conditional mean (LCM) assumption, which is famous in the sufficient dimension reduction literature. The LCM is shown to be satisfied when the distribution of $\X$ is elliptical or when $p$ is very large, see \cite{eaton1986characterization} and \cite{li2018sufficient} for details. The upcoming lemma guarantees the recovery of an unbiased estimate of $\cms$ using regression based on the surrogate response.

\begin{lemma}\label{lemma:surr_sdr}
Let $\tY$ be a random copy of $Y \in (\bOmega_Y,d)$. Define the surrogate $S_Y = \phi\big(d(Y,\tY)\big)$, for some measurable function $\phi: d(Y,\tY) \to \R$. Then, $\mathcal{S}_{S_Y|\X} \subseteq \spc$. 
\end{lemma}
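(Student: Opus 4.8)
The plan is to reduce the subspace inclusion to a single conditional-independence statement and then read it off from the standard conditional-independence calculus. Let $\B$ be a $p\times d$ matrix whose columns span the central subspace, so that $\spc=\mathrm{span}(\B)$; under the mild existence conditions cited above, $\mathrm{span}(\B)$ is itself a dimension reduction subspace, i.e.\ $Y\indep\X\mid\B^\top\X$. Since $\mathcal{S}_{S_Y|\X}$ is, by definition, the intersection of all dimension reduction subspaces for the regression of $S_Y$ on $\X$, it suffices to verify that $\mathrm{span}(\B)$ is one of them, namely that $S_Y\indep\X\mid\B^\top\X$; this immediately yields $\mathcal{S}_{S_Y|\X}\subseteq\mathrm{span}(\B)=\spc$.

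To obtain $S_Y\indep\X\mid\B^\top\X$, first observe that because $\tY$ is a random copy of $Y$, it is drawn independently of the joint pair $(\X,Y)$; in particular $\tY\indep\X\mid(Y,\B^\top\X)$. Combining this with $Y\indep\X\mid\B^\top\X$ via the contraction property of conditional independence gives $(Y,\tY)\indep\X\mid\B^\top\X$. Finally, $S_Y=\phi\big(d(Y,\tY)\big)$ is a measurable function of $(Y,\tY)$, so the independence is inherited by $S_Y$, i.e.\ $S_Y\indep\X\mid\B^\top\X$, which completes the argument.

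There is no deep obstacle here; the derivation is essentially bookkeeping with the semi-graphoid axioms. Two points do deserve care. The first is making precise what ``$\tY$ is a random copy of $Y$'' entails: the argument hinges on $\tY$ being independent of $(\X,Y)$ (equivalently, $(\tX,\tY)$ being an independent replicate of $(\X,Y)$), which is exactly what validates the step $\tY\indep\X\mid(Y,\B^\top\X)$; were $\tY$ coupled to $\X$ the conclusion could fail. The second is ensuring that $\spc$ exists and is itself a dimension reduction subspace, so that the intersection defining $\mathcal{S}_{S_Y|\X}$ can legitimately be compared against it. Note that neither Assumption~\ref{monotonicity} nor Assumption~\ref{LCM} is used at this stage; those enter only later, when one argues that the OLS slope computed from the surrogate response actually lands in (a subset of) the central mean space $\cms$.
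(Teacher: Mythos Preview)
Your argument is correct and follows essentially the same route as the paper's proof: from $Y\indep\X\mid\B^\top\X$ pass to $(Y,\tY)\indep\X\mid\B^\top\X$, then invoke that conditional independence is preserved under measurable functions to conclude $S_Y\indep\X\mid\B^\top\X$. You simply supply more detail---explicitly justifying the joint step via the contraction axiom and spelling out why the conditional-independence statement yields the subspace inclusion---where the paper's proof asserts these moves and cites Theorem~2.3 of \cite{li2018sufficient}.
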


\noindent Lemma \ref{lemma:surr_sdr} follows directly from Proposition 1 in \cite{soale2023data} and it implies that we can use the regression of $S_Y$ versus $\X$ to estimate $\cms$. 

\begin{theorem}\label{thrm:surr_ols}
Suppose Assumptions \ref{monotonicity} and \ref{LCM} are satisfied and that $\Sig = \var(\X)$ is invertible. Then
    \begin{align}
    \Sig^{-1}\Sig_{X S_Y} \in \cms, \text{ where } \Sig_{X S_Y} = \cov(\X, S_Y).
    \label{ols}
\end{align}
\end{theorem}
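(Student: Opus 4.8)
The plan is to run the classical argument that, under a linear conditional mean condition, the ordinary least squares slope lies in the central mean space (in the spirit of \cite{li1989regression}), with the surrogate $S_Y$ playing the role of the scalar response. The starting point is Lemma~\ref{lemma:surr_sdr}: since $\mathcal{S}_{S_Y|\X}\subseteq\spc$ and, under Model~\ref{sim}, $\spc=\mathrm{span}(\beta)$, we get $\mathcal{S}_{S_Y|\X}\subseteq\mathrm{span}(\beta)$. Because the central mean space $\mathcal{S}_{E(S_Y|\X)}$ is always contained in the central subspace $\mathcal{S}_{S_Y|\X}$, this forces $E(S_Y\mid\X)=E(S_Y\mid\beta^\top\X)=:g(\beta^\top\X)$ for some measurable $g$. (Equivalently, one can read this off directly: with $\tY$ an independent copy of $Y$, the conditional law of $S_Y=\phi(d(Y,\tY))$ given $\X=\x$ is that of $\phi(d(f(\beta^\top\x,\epsilon),\tY))$, which depends on $\x$ only through $\beta^\top\x$.)

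Next I would write $\Sig_{XS_Y}=\cov(\X,S_Y)=E[(\X-E\X)S_Y]$, condition on $\beta^\top\X$, and use the tower property together with $E(S_Y\mid\X)=g(\beta^\top\X)$ to obtain
\begin{align*}
\Sig_{XS_Y}=E\big[(\X-E\X)\,g(\beta^\top\X)\big]=E\big[\,E(\X-E\X\mid\beta^\top\X)\,g(\beta^\top\X)\big].
\end{align*}
Here Assumption~\ref{LCM} enters. Since $\beta^\top\X$ is scalar and $\var(\beta^\top\X)=\beta^\top\Sig\beta>0$ (as $\Sig$ is invertible and $\lVert\beta\rVert_2=1$), the linear conditional mean condition, combined with matching first and second moments, pins down the regression of $\X$ on $\beta^\top\X$:
\begin{align*}
E(\X-E\X\mid\beta^\top\X)=\Sig\beta(\beta^\top\Sig\beta)^{-1}\beta^\top(\X-E\X).
\end{align*}
Substituting and pulling the constant vector $\Sig\beta(\beta^\top\Sig\beta)^{-1}$ outside the expectation gives $\Sig_{XS_Y}=c\,\Sig\beta$ with scalar $c=(\beta^\top\Sig\beta)^{-1}\cov(\beta^\top\X,S_Y)$. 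Left-multiplying by $\Sig^{-1}$ then yields $\Sig^{-1}\Sig_{XS_Y}=c\,\beta\in\mathrm{span}(\beta)=\cms$, which is the claim.

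The one genuine subtlety is the role of Assumption~\ref{monotonicity}. The inclusion above holds for any $c$, and the theorem as stated only asserts membership (with $\mathbf{0}\in\cms$ trivially), so monotonicity is not needed for the literal statement; it is needed for the surrogate OLS to actually \emph{identify} the direction $\beta$, i.e. for $c\neq 0$. This is exactly the classical phenomenon noted in \cite{li1989regression} that OLS degenerates under a symmetric link: monotonicity of $f$, together with a surrogate map $\phi$ that does not annihilate the dependence (as is the case for the distance-based surrogate obtained via multidimensional scaling), makes $g(\beta^\top\X)=E(S_Y\mid\beta^\top\X)$ non-constant and monotone in $\beta^\top\X$, so that $\cov(\beta^\top\X,S_Y)\neq 0$. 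I expect this non-degeneracy claim, rather than the algebraic identity $\Sig_{XS_Y}=c\,\Sig\beta$ (which is routine once the LCM step is in place), to be the part requiring the most care.
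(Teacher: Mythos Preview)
Your argument is correct and follows exactly the paper's route: invoke Lemma~\ref{lemma:surr_sdr} to place $\mathcal{S}_{S_Y|\X}$ inside $\spc=\mathrm{span}(\beta)$, then use the classical OLS-under-LCM result to conclude $\Sig^{-1}\Sig_{XS_Y}\in\mathcal{S}_{E(S_Y|\X)}\subseteq\mathcal{S}_{S_Y|\X}\subseteq\spc=\cms$. The only difference is that the paper cites this last step as Theorem~8.3 of \cite{li2018sufficient} rather than writing out the tower-property computation you give; your observation that Assumption~\ref{monotonicity} is needed only for $c\neq 0$ (identifiability of the direction) and not for the membership claim itself is accurate.
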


\subsection{Metric Cook's Distance Estimation \label{subsec:mCD}}
Given a random sample $(\x_1, y_1), \ldots, (\x_n, y_n)$ of $(\X, Y) \in \R^p \times \Omega_Y$. We compute the $n \times n$ pairwise distance matrix $\D_Y$ such that $\D_{Y_{ij}} = d(y_i,y_j)$, for $i,j=1,\ldots,n$. Note that the metric $d$ depends on the metric space, and for a given metric space, there may be several choices for $d$. For instance, if $Y \in \R^q, q \geq 1$, we use $\D_{Y_{ij}} = \lVert y_i-y_j\rVert_2$. However, another investigator may choose a different distance applicable in the Euclidean space, say, the Manhattan distance. Similarly, if $\bOmega_Y$ is the collection of probability distributions, we follow the lead of \cite{petersen2019frechet} and use the Wasserstein metric, but we believe other metrics in the space such as the Hellinger distance will also work. In effect, we surmise that the choice of distance metric depends on the space of the response objects and the decision is left to the discretion of the practitioner where there are multiple choices for $d$, at least for now. 

Regardless of the response space and metric choice, $\D_Y \in \R^{n\times n}$ is expected to be a symmetric matrix whose diagonal entries are all zeros. Thus, $\D_Y$ is a dependent matrix. To obtain surrogate Euclidean vectors with the same distance matrix as $\D_Y$, we apply metric multidimensional scaling (MDS). MDS is a common dimension reduction technique used to estimate factor scores from the singular value decomposition of the double-centered squared distances. See \cite{kruskal1964multidimensional, kruskal1978multidimensional} for more on MDS. 

The MDS procedure is as follows. First, find the double-centered squared distance matrix
\begin{align}
    \M = -\cfrac{1}{2} \bm Q_n \D^2_Y \bm Q_n,
    \label{mmds}
\end{align}
where $\bm Q_n = \I_n - n^{-1}\bm J_n$, $\I_n$ is an $n\times n$ identity matrix, and $\bm J_n$ is an $n\times n$ matrix of all 1's. Thus, $\M$ is positive semi-definite with a spectral decomposition $\M = \V\bm \Lambda \V^\top$, where $\bm\Lambda = diag\{\lambda_1, \ldots, \lambda_k\}$ with $\lambda_1 \geq \lambda_2 \ldots \geq \lambda_k > 0$ and $\V^\top\V = \I_k$. We then proceed to find the first $k$ factor scores $S = \V\Lambda^{1/2} \in \R^{n\times k}$. Here, we retain only the first factor score as our surrogate response. Ideally, we would have to determine the number of scores to retain. However, since we focus only on single-index models, we choose the first score, which corresponds to the eigenvector of the leading eigenvalue of $\M$. 

Once we estimate $S_Y$, we proceed to find the Cook's distances for the regression between $S_Y$ and $\X$. The calculation of the Cook's distances involves the intercept estimate, which is incorporated into (\ref{ols}) by augmenting the $\X$ matrix with a vector of 1's. Let $\tX = [1, \X]$ and the corresponding parameter estimate $\tilde\beta = \tilde\Sig^{-1}\Sig_{\tilde X S_Y}$, where $\tilde\Sig=\var(\tilde\X)$. Denote $\tilde\beta^{(-i)}$ as the OLS estimate when the $i$th observation is left out. The $i$th Cook's distance is defined as
\begin{align}
    \delta_i = \cfrac{(\tilde\beta^{(-i)} - \tilde\beta)^\top(\tX^\top\tX)(\tilde\beta^{(-i)} - \tilde\beta)}{(p+1)s^2},
\end{align}
where $s^2 = \cfrac{1}{n-p-1}\displaystyle\sum_{i=1}^n (S_{Y_i} - \tilde\beta^\top\tX_i)^2$. Large values of $\delta_i$ indicate that the $i$th observation is likely influential and warrants further investigation. While there is no standard threshold for classifying an observation as influential as several cutoffs have been suggested in the literature, we use $4/(n-p-1)$ for the sake of visualization. 

\begin{remark}
Unlike the original Cook's distance where the response after deleting the $i$th observation is simply $Y^{(-i)}$, here $S_{Y^{(-i)}}$ is based on $\D_{Y^{(-i)}}$, which is not necessarily the same as deleting the $i$th observation from $S_Y$.  
\end{remark}

\noindent The estimation is summarized in the algorithm that follows.

\begin{algorithm}[htb]
\caption{Metric Cook's Distance}
\begin{algorithmic}[1]
   \State  Input: Predictor $\X$ as $(n \times p)$ matrix and response $Y_n = \{y_1,\ldots,y_n\}$ as list
   \State Compute the $n\times n$ distance matrix $\D_Y$, where $\D_{Y_{ij}} \gets d(y_i, y_j), \ \forall i,j = 1\ldots, n$
   \State Compute the MDS factor scores based on $\D_Y$ and set the first score as $S_Y$
   \State Compute the OLS estimate $\tilde\beta$ from the regression between $S_Y$ and $\X$
   \State Repeat Steps 2-4 after deleting the $i$th observation and compute the $i$th Cook's distance
\end{algorithmic}
\end{algorithm}

\section{Experimental Studies\label{sec:3}}
\noindent In this section, we will demonstrate the performance of the metric Cook's distance on synthetic data in regression with responses across four different metric spaces. For each response space, we will examine the estimation accuracy and the effect of trimming the influential observations on the kernel Fr\'echet OLS (FOLS) and the surrogate-assisted OLS (sa-OLS) basis estimates, where applicable. The accuracy of each method is measured using 
\begin{align}
	\Delta = \lVert \bP_{\beta} - \bP_{\hat\beta} \rVert_F,
\end{align}  
where $\bP_\A = \A(\A^\top \A)^{-1}\A^\top$ and $\lVert.\rVert_F$ is the matrix Frobenius norm. $\Delta$ is a popular criterion for evaluating methods in the SDR literature. Smaller values of $\Delta$ indicate better performance. 

\subsection{Metric Cook's Distance in Different Metric Space}
While large samples are more likely to contain outliers, the effect of each outlying observation on the estimates is less when the sample is large. Therefore, we focus our analysis on small and moderate sample sizes. For the upcoming examples, we fix $(n, p) = (100, 5)$ and then generate the predictor as $\X=(X_1,\ldots,X_5) \sim t_{20}(\bm 0, \I_5)$. The multivariate $t$ distribution satisfies the linearity condition in Assumption \ref{LCM}. Also, because of the heavy tails of the $t$ distribution, we expect $\X$ to contain some outliers, which may not necessarily be obvious by visualization. 

Let the true basis vector be $\beta^\top=(1,1,0,0,0)/\sqrt{2}$. Because we are dealing with subspace estimation rather than parameter estimation, we think of our estimates in terms of loadings. This means we expect the first two coefficients of the $\cms$ basis estimate to be non-zero and equal in proportion and the remaining coefficients to be close to zero. 

\subsubsection{Euclidean Response}
\noindent The original Cook's distance method of \cite{cook1977detection} readily applies in regression with $Y\in\R$ and yields the same results as the metric Cook's distance as $S_Y$ is simply a scaled $Y$ in this case. Thus, we focus on the high dimensional response $\Y \in \R^q, \ q \geq 2$, where the metric Cook's distance is more useful.

\begin{example}[Reduced-rank vector response] \quad \\
Generate a response in $\R^2$ as $\bm Y_i = \sin(\pi/2 + \B^\top\X_i) + \bm\epsilon_i$, for $i=1,\ldots, n$, where 
$\B^\top = (\beta,-2\beta)^\top $ and noise $\bm\epsilon_i \overset{i.i.d.}{\sim} N(\bm 0, \Sig_\epsilon)$ with $\Sig_\epsilon = \begin{pmatrix} 1 & 0.5 \\ 0.5 & 1 \end{pmatrix}$. Here $\B$ has rank 1 and $\beta$ is the true basis of $\cms$.
\end{example}  

The plot of the metric Cook's distance are given in exhibit (A) of Figure \ref{fig:mcd_euc}. 

\begin{figure}[htb!]
    \centering
    \includegraphics[width=0.78\textwidth,]{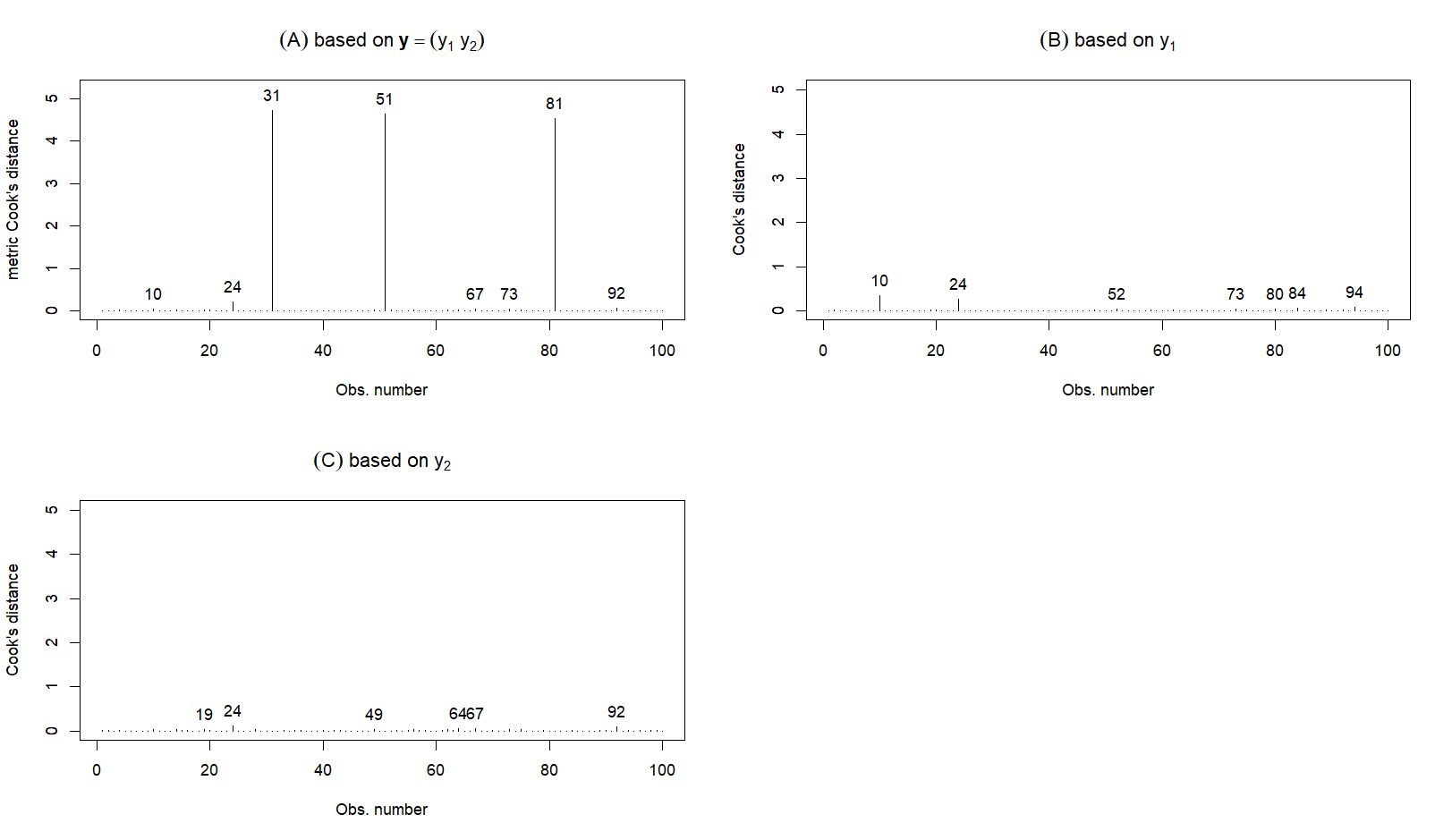}
    \caption{Cook's distances in (A) are based on OLS regression with $S^Y$ as response while those for exhibits (B) and (C) are based on separate regressions with $y_1$ and $y_2$, respectively.}
    \label{fig:mcd_euc}
    \includegraphics[width=0.75\textwidth,]{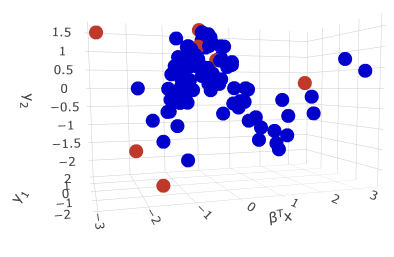}
    \caption{Scatter plot of the bivariate response $\bm y$ versus the sufficient direction $\beta^\top\x$. Influential observations based on the metric Cook's distance are colored red.}
    \label{fig:eup_resp}
\end{figure}

Exhibits (B) and (C) represent the original Cook's distance from the classical OLS regression. Note that implementing the classical OLS with multiple responses is equivalent to doing so for each response and stacking the coefficients together in columns. Thus, the classical OLS ignores the dependence information between the responses. This however, is not a problem when using the metric Cook's distance. We also see that the metric Cook's distances capture most of the influential observations flagged by the marginal Cook's distances in (B) and (C). Interestingly, the most influential observations based on the joint response, i.e., 31, 51, and 81 in exhibit (A) were not even captured by the marginal Cook's distances. This is one of the disadvantages of using marginal Cook's distances for multiple responses. Another disadvantage of using the marginal Cook's distances for $Y \in \R^q, \ q > 1$, is that we have to examine $q$ Cook's distance plots for each response separately, which can be tedious for large $q$. The 3D scatter plot of $\bm y=(y_1,y_2)$ vs the sufficient predictor $\beta^\top\x$ is given in Figure \ref{fig:eup_resp}. The influential observations indicated in red are based on the metric Cook's distance. 

To further investigate the effect of the influential observations, we find the basis estimates with and without them. The result is presented in the Table that follows. 

\begin{table}[htb]
\centering
\caption{* denotes estimates without influential observations}
\label{ex:euc}
\begin{tabular}{lccccc|c}
\hline
 & $x_1$ & $x_2$ & $x_3$ & $x_4$ & $x_5$ & $\Delta$ \\ \hline    
sa-OLS  & -0.6113 & -0.5810 & 0.1662  & -0.0724 & 0.3617 & 0.6125 \\
sa-OLS* & -0.5149 & -0.7949 & -0.0996 & -0.2098 & 0.2345 & 0.5428\\
\hline
\end{tabular}
\end{table}
\noindent Looking at the coefficients in Table \ref{ex:euc}, the sa-OLS appears to be doing a good job at recovering the basis of the central mean space with a slight improvement when we omit the influential observations (inf. obs). We did not consider FOLS here because it was not implemented with Euclidean responses in \cite{zhang2023dimension}.

\subsubsection{Distribution as Response}
\noindent For this class of response objects, we use the Wasserstein metric to find the pairwise distances. The Wasserstein metric for univariate distributions can be expressed in terms of the $\ell_p$ norm of the quantiles of the input distributions. The $k$th Wasserstein is given by 
\begin{align}
 W_k(y,y') = \lVert F_{y}^{-1} - F_{y'}^{-1}\rVert_k = \left(\int_0^1 \big\lvert F_{y}^{-1}(s)  - F_{y'}^{-1}(s)\big\rvert^k ds \right )^{1/k},
\label{wasserstein}
\end{align}
where $F_{y}^{-1}$ and $F_{y'}^{-1}$ are the respective quantile functions of $y$ and $y'$. For the surrogate response, we use the leading MDS score based on $W_1(.)$ distances as it is more robust to outliers compared to the $W_2(.)$ metric used in \cite{petersen2019frechet} and other previous studies.

\begin{example} \quad \\
Generate each response as a mixture of normal distributions as follows: 
\begin{align*}
    Y_i \in \R^{100} \overset{i.i.d.}{\sim} 0.6N(\beta^\top\X_i, 1) + 0.4N(-\beta^\top\X_i, 2), \text{ for } i= 1,\ldots,n.
\end{align*}
\end{example}

\noindent The metric Cook's distances are given in Figure \ref{fig:MCD_dist} and the 3D plot of the densities versus the sufficient predictor ($\beta^\top\x$) is given in Figure \ref{fig:dist_resp}. The densities in red denote the influential or anomalous distributions. From Figure \ref{fig:dist_resp}, it appears that distributions associated with extreme values of $\beta^\top\x$ and distributions with more complex shapes are flagged as influential. Thus, it is important when using a metric other than the Wasserstein, the investigator chooses a metric that can capture even slight differences in shape.

\begin{figure}[htb!]
    \centering
    \includegraphics[width=0.75\textwidth, height=0.4\textwidth]{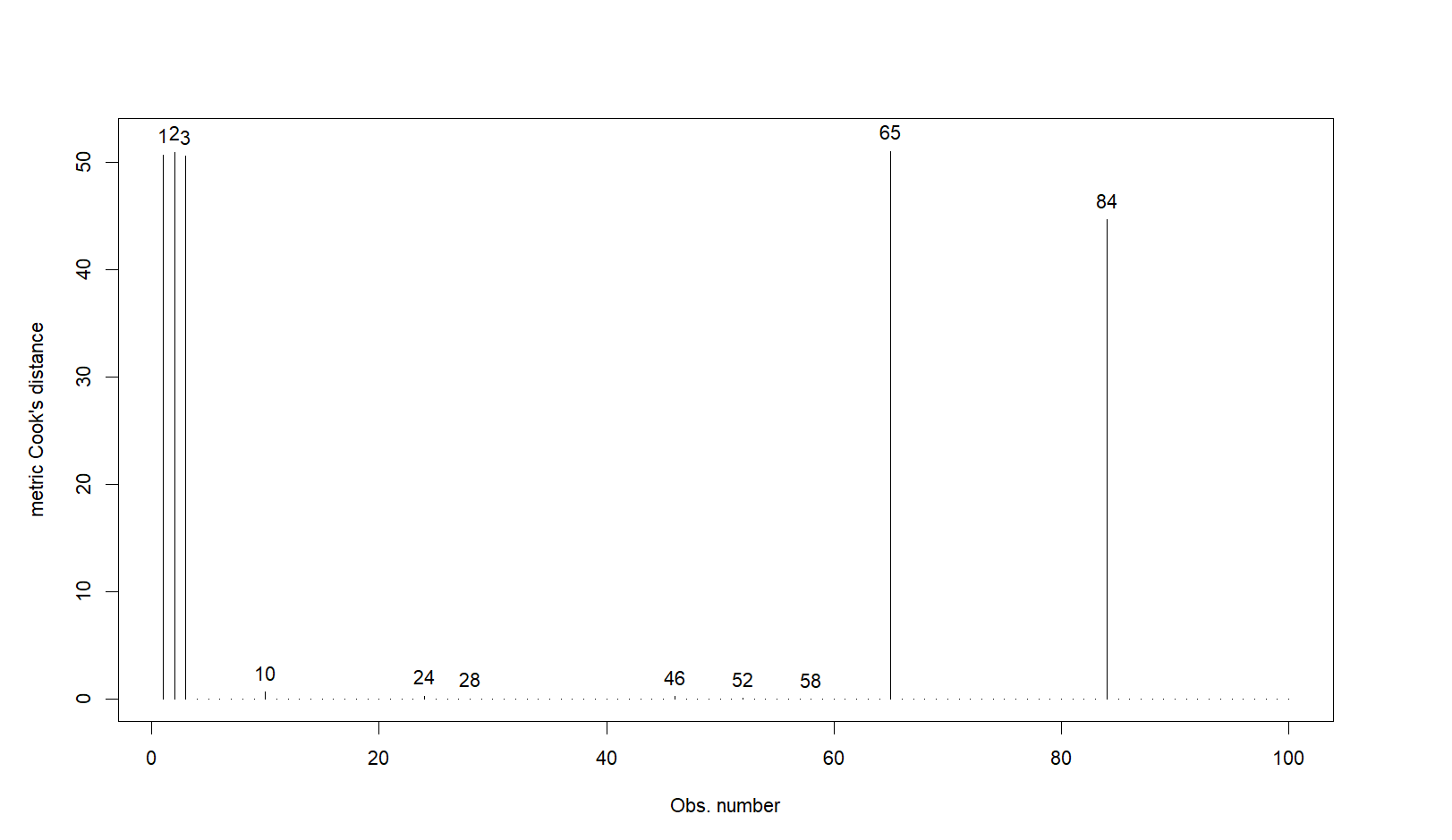}
    \caption{Metric Cook's distances for distributional response regression}
    \label{fig:MCD_dist}
     \includegraphics[width=0.75\textwidth, height=0.5\textwidth]{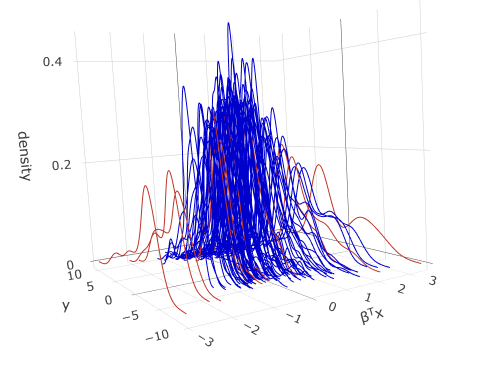}
    \caption{3D plot of distributions vs sufficient predictor $\beta^\top\x$. The red lines indicate the influential observations based on the metric Cook's distance.}
    \label{fig:dist_resp}
\end{figure}

\noindent Both the FOLS and sa-OLS estimators are applicable here and their results are provided in Table \ref{tbl:dist}. Both methods show significant improvement after omitting the influential observations. 

\begin{table}[htb!]
\centering
\caption{* denotes estimates without influential observations}
\label{tbl:dist}
\begin{tabular}{lccccc|c}
\hline
 & $x_1$ & $x_2$ & $x_3$ & $x_4$ & $x_5$ & $\Delta$ \\ \hline    
FOLS  & 0.7752  & 0.5521  & -0.1268 & 0.0155  & 0.0135  & 0.2995 \\
FOLS* & 0.8237  & 0.7880  & -0.0164 & -0.0087 & -0.0289 & 0.0529 \\
sa-OLS  & -0.7673 & -0.5326 & 0.1395  & -0.0634 & 0.0151  & 0.3382 \\
sa-OLS* & 0.8284  & 0.7707  & -0.0325 & 0.0203  & -0.0466 & 0.0909 \\ \hline
\end{tabular}
\end{table}

\subsubsection{Network as response}
Here, we consider our response as a network or graph, denote as $G=(V,E)$, where $V$ is the set of vertices or nodes and $E$ is the set of edges or links. The networks are allowed to have weighted and unweighted edges. We employ two distance measures: the {\it centrality distance (CD)} of \cite{roy2014modeling} and the {\it graph diffusion distance (DD)} of \cite{hammond2013graph}. The centrality distance is based on unweighted edges and is given by
    \begin{align}
        d_{CD}(G_1,G_2) = \displaystyle\sum_{v\in V} |C(G_1,v) - C(G_2,v)|.
        \label{cent_dist}
    \end{align}
The diffusion distance on the other hand incorporate the weight of the edges and is given by 
\begin{align}
    d_{DD,t}(G_1,G_2) = \max_t \big(\lVert e^{-t\L_{G_1}} - e^{-t\L_{G_2}}\rVert_F^2\big)^{1/2},
\label{diff_dist}
\end{align}
where for graph $G_i$, the Laplacians $\L_{G_i} = \D_{G_i} - \A_{G_i}$, $\D_{G_i}$ is the diagonal matrix of degrees, and 
$\lVert . \rVert_F$ denotes the matrix Frobenius norm. $t$ is the time which we take to be 1 because we are only dealing with static networks. 

\begin{example}
Generate the random network response $y_i$ based on an Erd\"os–R\'enyi model with probability $p = plogis\big(\sin(\beta^\top\X_i)\big), \ i=1,\ldots,n$, 
where $plogis(x) = \cfrac{1}{1+e^{-x}}$ and $plogis(.)$ is the cumulative distribution function (CDF) of the logistic distribution.    
\end{example}

\noindent In this example, we use the surrogate response based on the graph diffusion distances to estimate the metric Cook's distances. The metric Cook's distances are given in Figure \ref{fig:MCD_network}. 

\begin{figure}[htb!]
    \centering
    \includegraphics[width=0.85\textwidth,]{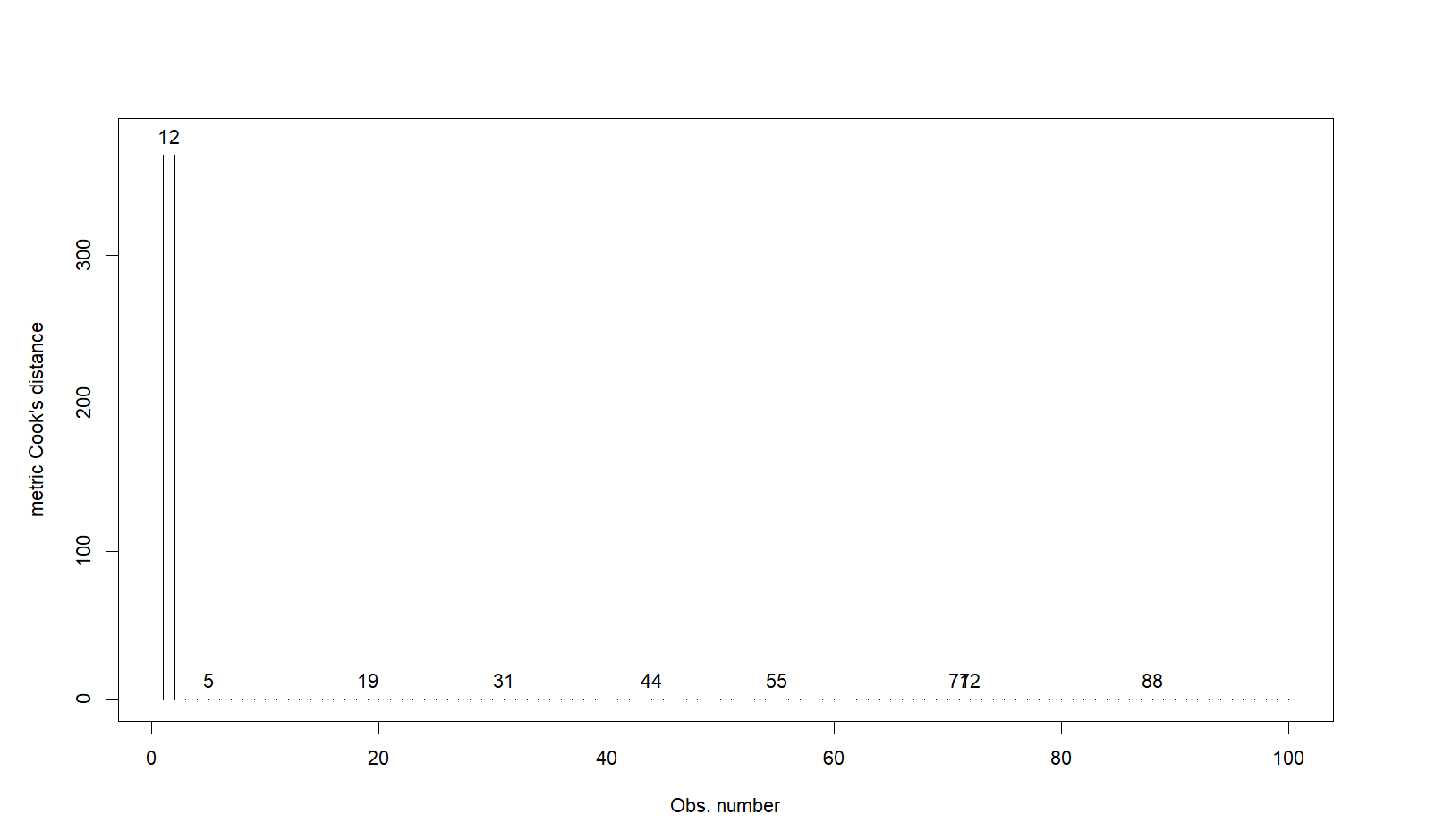}
    \caption{Metric Cook's distances for network response regression.}
    \label{fig:MCD_network}
\end{figure}

\noindent Since we are unable to plot all the 100 networks due to space constraints, we illustrate a sample of the normal and influential/anomalous networks in Figure \ref{fig:network_resp}. Although, it is hard to tell which network is anomalous, it appears the networks with fewer or more than the expected number of edges are flagged as influential or anomalous. 

\begin{figure}[htb!]
    \centering
    \includegraphics[width=0.95\textwidth, height=0.6\textwidth]{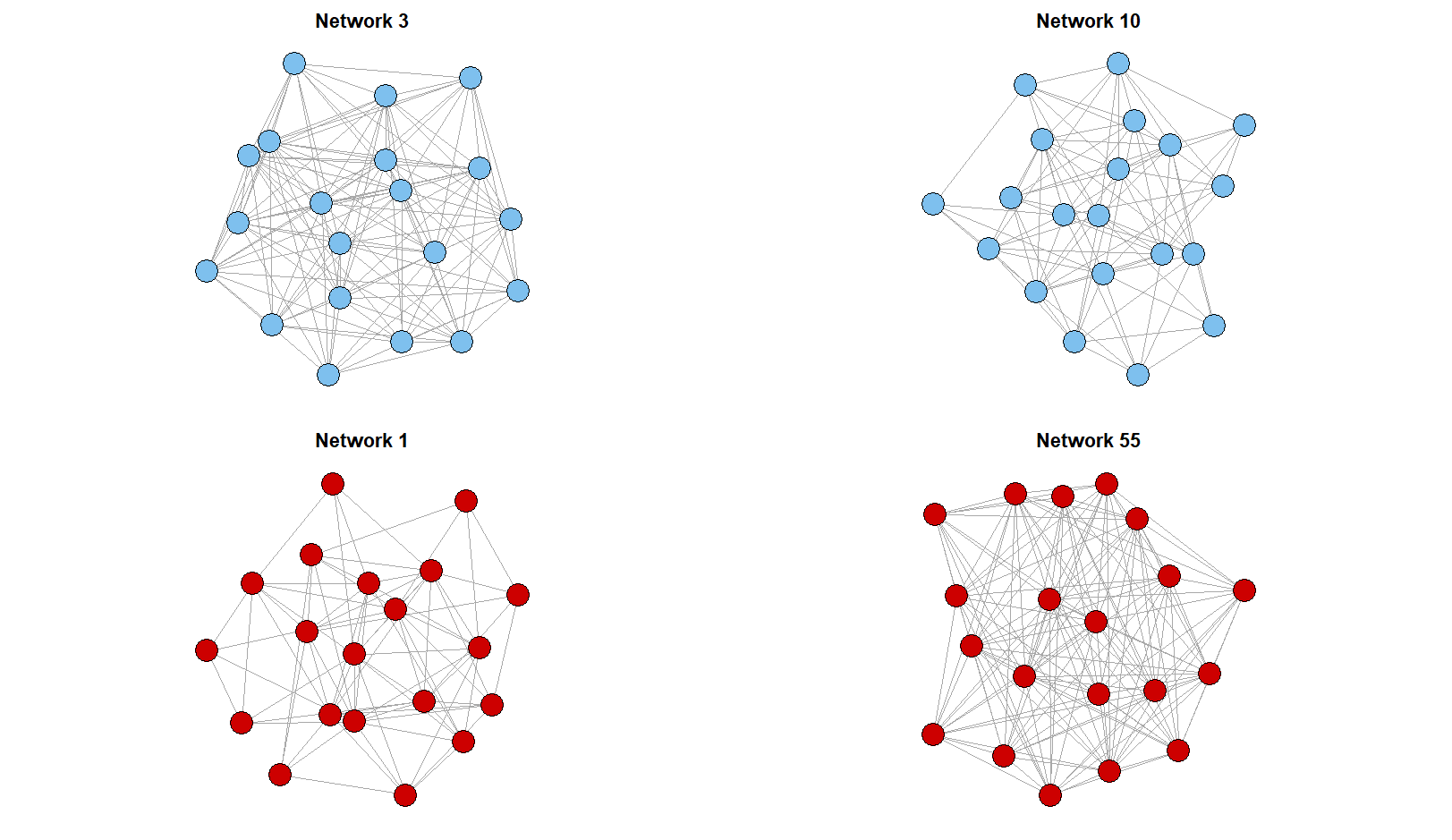}
    \caption{Four sample networks out of the 100 networks. The networks with red nodes indicate the influential/anomalous networks based on the metric Cook's distance.}
    \label{fig:network_resp}
\end{figure}

In Table \ref{ex:net}, we provide the sa-OLS estimates based on both the centrality distances (CD) and diffusion distances (dd). Both estimators show significant improvement after omitting the influential observations. The estimators also appear to be very similar as expected since the networks are unweighted. We did not consider FOLS as it was not implemented with network responses in \cite{zhang2023dimension}.

\begin{table}[htb!]
\centering
\caption{* denotes estimates without influential observations}
\label{ex:net}
\begin{tabular}{lccccc|c}
\hline
 & $x_1$ & $x_2$ & $x_3$ & $x_4$ & $x_5$ & $\Delta$ \\ \hline   
sa-OLS (cd)  &  0.7049 & 0.7442 & -0.0042 & -0.0022 & -0.1193 & 0.1680 \\
sa-OLS (cd) * & 0.8422 & 0.8164 & -0.0246 & -0.0101 & -0.0897 & 0.1146 \\ 
sa-OLS (dd)  & -0.7043 & -0.7372 & 0.0163 & -0.0106 & 0.0835 & 0.1228 \\
sa-OLS* (dd) & 0.8515 & 0.8075 & -0.0230 & 0.0129 & -0.0635 & 0.0908  \\
\hline
\end{tabular}
\end{table}

\subsubsection{Functional response}
\noindent For this response space, we follow \cite{soale2023data} and use the discrete Fourier distances. The interested reader may refer to \cite{faraway2014regression} to see how Fr\'echet distance can be used. 

\begin{example} \quad \\
We generate each response based on 30 random time points $t\in\R^{30} \overset{i.i.d.}{\sim} Unif(0, 10)$. Let $\alpha(t) = 2\sin(\pi + \pi t/5)$ be the time-varying intercept. The responses are generated as $Y_i(t) = \alpha(t) + 2\sin\left(\pi t/2 + \beta_1^\top\X_{i}\right) + \epsilon_{i}(t)$ for $i=1,\ldots,n$,
where $\epsilon_{i}(t) \overset{i.i.d.}{\sim} N(0,1)$. 
\end{example}

\noindent The metric Cook's distances are provided in Figure \ref{fig:MCD_func}. 

\begin{figure}[htb!]
    \centering
    \includegraphics[width=0.85\textwidth, height=0.5\textwidth]{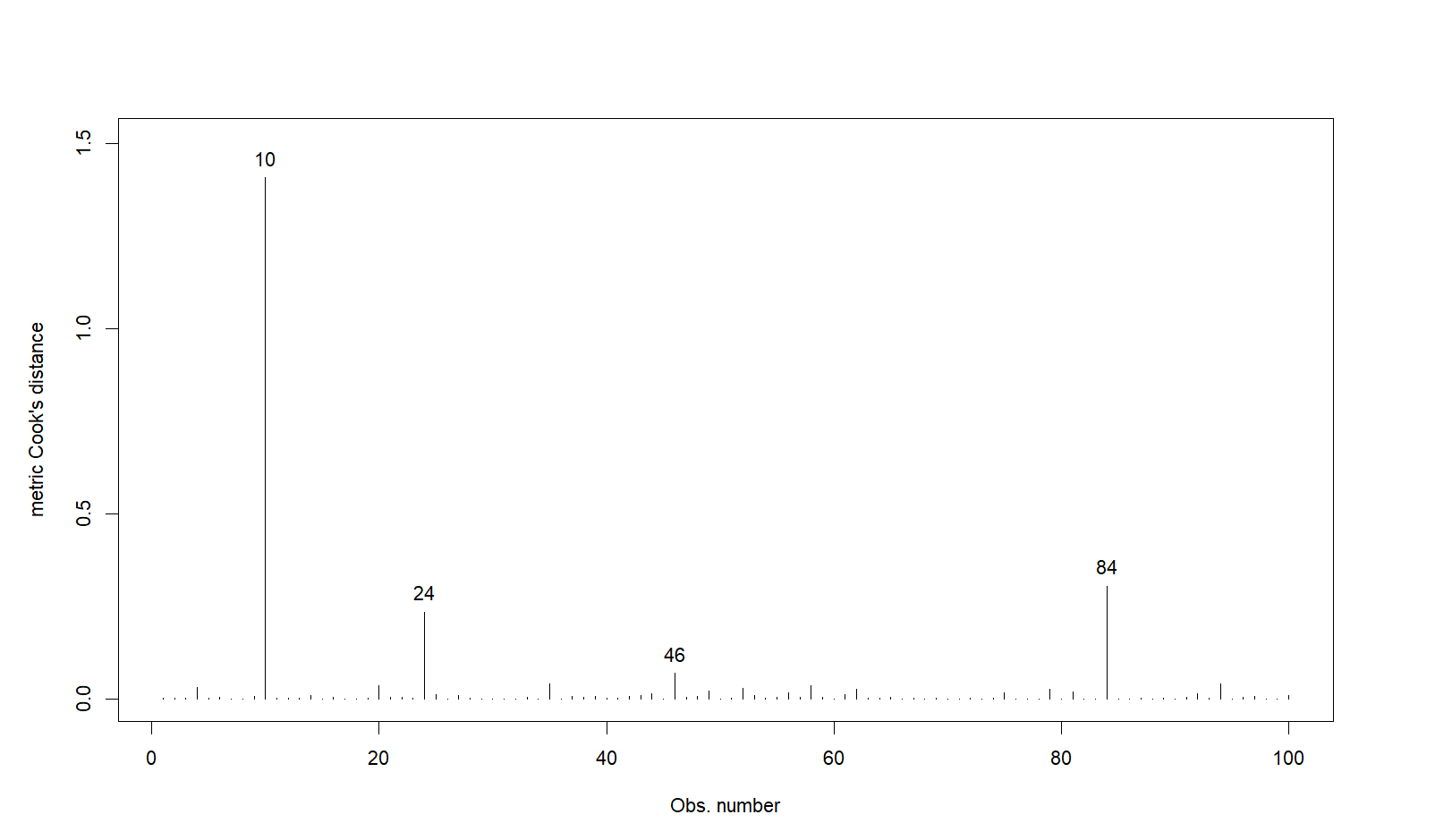}
    \caption{Metric Cook's distances for functional response regression}
    \label{fig:MCD_func}
\end{figure}

\noindent The plot of the functional responses vs the sufficient predictor is given in Figure \ref{fig:func_resp}. The functions that have different periodic spikes and those associated with extremely large values of $\beta^\top\x$ appear to be the ones flagged as influential. 

\begin{figure}[htb!]
    \centering
    \includegraphics[width=0.75\textwidth, height=0.52\textwidth]{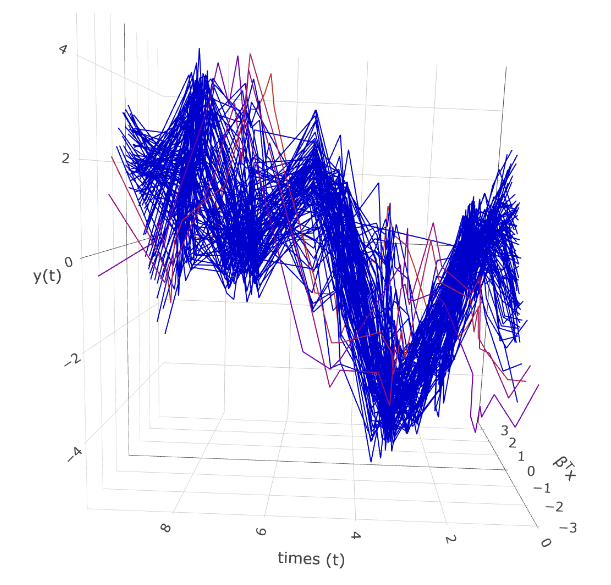}
    \caption{3D plot of functional responses vs sufficient predictor $\beta^\top\x$. The red lines indicate the influential observations based on the metric Cook's distance.}
    \label{fig:func_resp}
\end{figure}

The estimates of the sa-OLS before and after trimming the influential observations are provided in Table \ref{ex:func}.
We see that trimming the influential observations improves the sa-OLS estimates. FOLS estimates were not included here for the same reason as before.

\begin{table}[htb!]
\centering
\caption{* denotes estimates without influential observations}
\label{ex:func}
\begin{tabular}{lccccc|c}
\hline
 & $x_1$ & $x_2$ & $x_3$ & $x_4$ & $x_5$ & $\Delta$ \\ \hline    
sa-OLS  & 0.7533 & 0.5970 & -0.0687 & 0.0197 & 0.0470 & 0.2047 \\
sa-OLS* & 0.7866 & 0.7817 &  0.0251 & 0.0293 & -0.0506 & 0.0812\\
\hline
\end{tabular}
\end{table}

\noindent Again, it is important to note that the anomaly in the functional response could be due to the phase or amplitude variations. Thus, if the investigator chooses to use a metric different from the discrete Fourier distance, we advise that that metric be one that takes into account both the phase and amplitude variations of the input functions.

\subsection{Effects of trimming influential observations}
This part of the experimental studies will demonstrate the impact of influential observations on the central mean space estimates beyond a single sample. To show this, we repeat Examples 1--4 for 500 random samples and report the average value of $\Delta$. We expand the experiments to include two different models for each response space. The simulation experiments will also include estimates for $(n, p) = (200, 10)$ in addition to the $(100, 5)$ already presented. 

For all models, the predictor is generated as $\X \sim t_{20}(\bm 0, \I_p)$ as before and the responses from the four metric spaces are considered. 

{\bf Euclidean response:}
\begin{enumerate}
    \item[(I)] $\Y_i = \sin(\pi/2 + \B^\top\X_i) + 0.5\bm\epsilon_i, \ i=1,\ldots,n$
    \item[(II)] $\Y_i = 0.8(\B^\top\X_i)^3 + 0.5\bm\epsilon_i, \ i=1,\ldots,n$
\end{enumerate}   

{\bf Distribution as response:}
\begin{enumerate}
    \item[(III)]  $Y_i \in \R^{100} \overset{i.i.d.}{\sim} 0.6N(\beta^\top\X_i, 1) + 0.4N(-\beta^\top\X_i, 2)$, \text{ for } $i= 1,\ldots,n.$
    \item[(IV)]  $Y_i \in \R^{100} \overset{i.i.d.}{\sim} Poisson(e^{\beta^\top\X_i})$, \text{ for } $i= 1,\ldots,n.$
\end{enumerate}

{\bf Network as response:}
\begin{enumerate}
    \item[(V)] $Y_i$ follows an Erd\"os–R\'enyi model with probability 
    $p = plogis\big(\sin(\beta^\top\X_i)\big)$, \\
    where $plogis(x) = \cfrac{1}{1+e^{-x}}$.

    \item[(VI)]  $Y_i$ follows a stochastic block model with Poisson distributed edge weights. The block proportions are $(0.4,0.3,0.3)$ with the parameter matrix $\M = 15\I_3 + \lceil{e^{\beta^\top\X_i}} \rceil$. See \cite{leger2016blockmodels} for details.
\end{enumerate}

{\bf Functional response:}
\begin{enumerate}
    \item[(VII)] $Y_i(t) = \alpha_0(t) + 2\sin\left(\pi t/2 + \beta_1^\top\X_{i}\right) + 0.5\epsilon_{i}(t)$,
    \item[(VIII)] $Y_i(t) = \alpha_0(t) + \sin\left(\pi t/2 + \beta^\top\X_{i}\right) + \sin\left(\pi t/2 - \beta^\top\X_{i}\right)  + 0.5\epsilon_{i}(t)$.
\end{enumerate}

\noindent The results of the simulation study are provided in Table \ref{tab:sims}. 

\begin{center}
\small
\begin{longtable}{ccclcc}
\caption{Mean (standard deviation) of $\Delta$ based on 500 random samples.}
\label{tab:sims} \\
\hline
\multicolumn{1}{l}{Response type} & \multicolumn{1}{l}{Model} & \multicolumn{1}{l}{(n, p)} & Method & \multicolumn{1}{l}{All obs.} & \multicolumn{1}{l}{Without influential obs.} \\ \hline
\endhead
\multicolumn{6}{r}{{Continued on next page}} \\ \hline
\endfoot

\hline
\endlastfoot

\multirow{4}{*}{Euclidean} & \multirow{2}{*}{I} & \multirow{1}{*}{(100, 5)} & sa-OLS & 1.1880 (0.0111) & 1.1420 (0.0129) \\
&  & \multirow{1}{*}{(200, 10)} & sa-OLS & 1.3094 (0.0058) & 1.2883 (0.0069) \\
\cline{2-6}  
& \multirow{2}{*}{II} & \multirow{1}{*}{(100, 5)} & sa-OLS & 0.4854 (0.0128) & 0.2590 (0.0067)\\
& & \multirow{1}{*}{(200, 10)} & sa-OLS & 0.6660 (0.0147) & 0.2737 (0.0076)\\
\hline

\multirow{8}{*}{Distribution} & \multirow{4}{*}{III} & \multirow{2}{*}{(100, 5)} & FOLS & 0.1881 (0.0057) & 0.1346 (0.0048) \\
& & & sa-OLS & 0.2009 (0.0055) & 0.1370 (0.0046) \\
& & \multirow{2}{*}{(200, 10)} & FOLS & 0.1972 (0.0049) & 0.1403 (0.0042)\\
& & & sa-OLS & 0.2179 (0.0047) & 0.1462 (0.0043) \\
\cline{2-6} 
& \multirow{4}{*}{IV} & \multirow{2}{*}{(100, 5)} & FOLS & 0.4249 (0.0089) & 0.2123 (0.0051)\\
& & & sa-OLS &  0.2943 (0.0061) & 0.1782 (0.0042)\\
& & \multirow{2}{*}{(200, 10)} & FOLS & 0.5824 (0.0094) & 0.2714 (0.0041)  \\
& & & sa-OLS & 0.3726 (0.0065) & 0.2129 (0.0029) \\
\hline

\multirow{8}{*}{Network} & \multirow{4}{*}{V} & \multirow{2}{*}{(100, 5)} & \multicolumn{1}{c}{sa-OLS (cd)} & 0.1747 (0.0039) & 0.1056 (0.0019) \\
& & & \multicolumn{1}{c}{sa-OLS (dd)} & 0.1865 (0.0040) & 0.1200 (0.0023) \\
& & \multirow{2}{*}{(200, 10)} & \multicolumn{1}{c}{sa-OLS (cd)} & 0.1890 (0.0029) & 0.1184 (0.0029) \\
& & & \multicolumn{1}{c}{sa-OLS (dd)} & 0.2064 (0.0029) & 0.1362 (0.0029)\\
\cline{2-6} 
& \multirow{4}{*}{VI} & \multirow{2}{*}{(100, 5)} & \multicolumn{1}{c}{sa-OLS (cd)} & 0.3926 (0.0071) & 0.3295 (0.0068) \\
& & & \multicolumn{1}{c}{sa-OLS (dd)} & 0.2271 (0.0038) & 0.1989 (0.0041)\\

& & \multirow{2}{*}{(200, 10)} & \multicolumn{1}{c}{sa-OLS (cd)} & 0.4745 (0.0068) & 0.3640 (0.0054) \\
& & & \multicolumn{1}{c}{sa-OLS (dd)} & 0.2479 (0.0029) & 0.2225 (0.0042) \\ 
\hline

\multirow{4}{*}{Functional} & \multicolumn{1}{l}{\multirow{2}{*}{VII}} & \multicolumn{1}{l}{\multirow{1}{*}{(100, 5)}}  & sa-OLS & 0.1785 (0.0039) & 0.0994 (0.0037) \\
& \multicolumn{1}{l}{} & \multicolumn{1}{l}{\multirow{1}{*}{(200, 10)}} & sa-OLS & 0.2039 (0.0031) & 0.1244 (0.0045) \\
\cline{2-6} 
& \multicolumn{1}{l}{\multirow{2}{*}{VIII}} & \multicolumn{1}{l}{\multirow{1}{*}{(100, 5)}}  & sa-OLS & 1.0567 (0.0144) & 0.9142 (0.0160) \\
& \multicolumn{1}{l}{} & \multicolumn{1}{l}{\multirow{1}{*}{(200, 10)}} & sa-OLS & 1.2231 (0.0090) & 1.1161 (0.0122) \\
\hline
\end{longtable}
\end{center}

Across models and response spaces, trimming the influential observations improved the estimation accuracy of almost all the methods. Moreover, the influential observations appear to have more impact in models with unbounded support. For example, for Euclidean responses, we see drastic changes in estimation accuracy in model II than in model I. This is because the $\sin(.)$ function in model I bounds the mean component between -1 and 1, thus, curtailing the severe impact of the outliers. 

\section{Applications to Real Data \label{sec:4}}
We now turn our attention to two real applications to see how the metric Cook's distance fares in practice. The data sets considered are the same ones used in \cite{soale2023data}. 

\subsection{COVID-19 transmission in the State of Texas}
In this application, the responses are the distributions of total new COVID-19 cases per 100,000 persons in the last 7 days reported at the county level between 08/1/2021 and 02/21/2022 for the State of Texas, which consists of 254 counties. The predictors are nine demographic characteristics of the counties from the 2020 American Community Survey. Both data sets are publicly available at \cite{coviddata} and \cite{acs2020}, respectively. Also, see \cite{soale2023data} for detailed data description.

The influential observations based on the metric Cook's distances for the regression between the COVID-19 transmission distributions vs the demographic characteristics are given in Figure \ref{fig:MCD_covid}.
\begin{figure}[htb!]
    \centering
    \includegraphics[width=0.9\textwidth, height=0.4\textwidth]{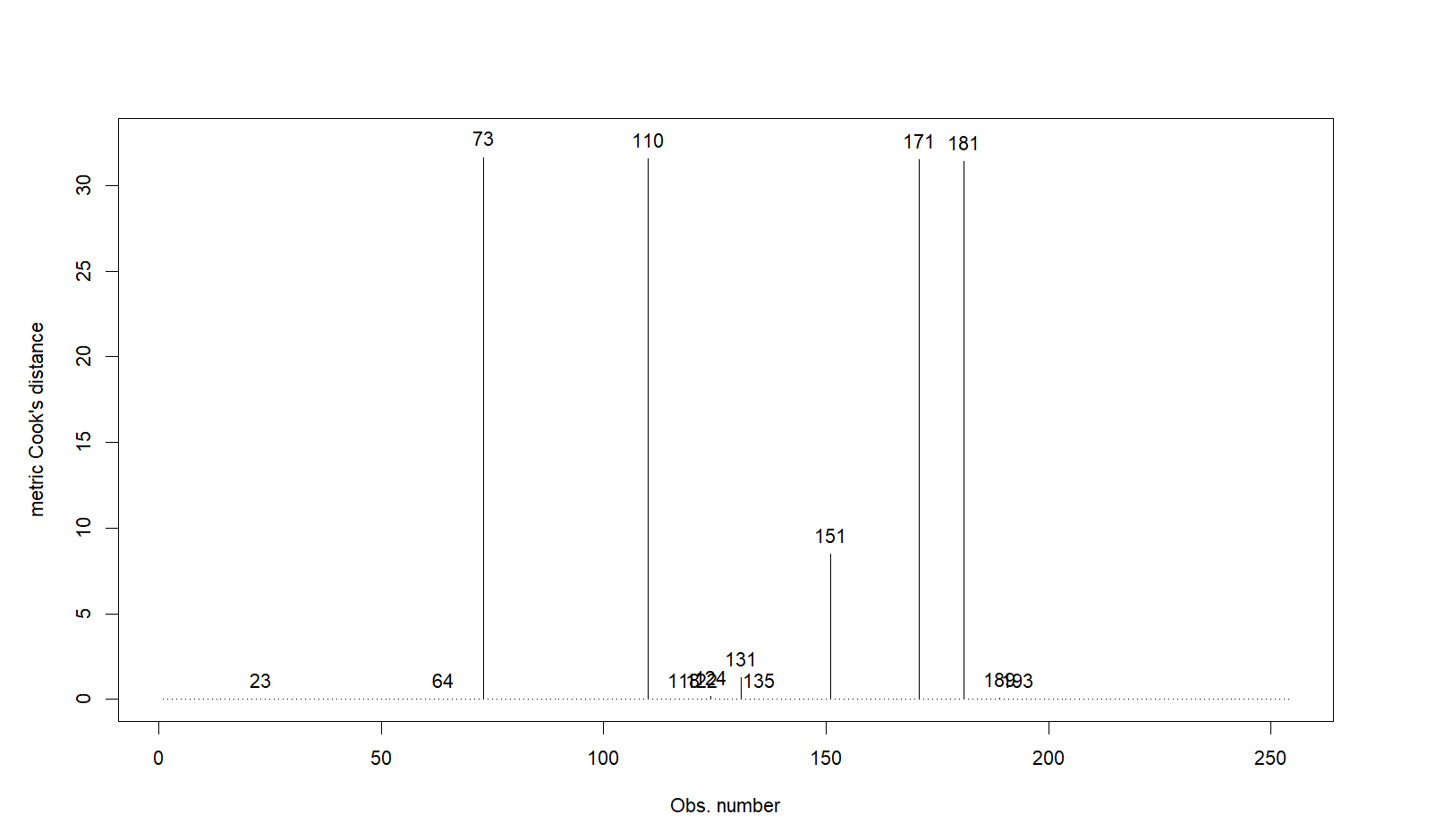}
    \caption{Influential observations based on metric Cook's distance for the regression between COVID-19 transmission distributions vs demographic characteristics. }
    \label{fig:MCD_covid}
\end{figure}
The counties corresponding to the influential observations: 73, 110, 151, 171, and 181 are Falls County, Hockley County, Loving County, Moore County, and Orange County. 

\begin{table}[htb!]
\centering
\caption{estimated bases of the central mean space. * denotes estimates after omitting influential observations.}
\label{tbl:covid}
\begin{tabular}{lcccc}
\hline
                             & FOLS  & FOLS* & sa-OLS  & sa-OLS* \\ \hline
\% Non-Hispanic Blacks       & 0.0612  & 0.1860   & 0.1728  & 0.3102   \\
\% Hispanics                 & -0.2771 & -0.0339 & -0.3974 & -0.3431 \\
\% Adults 65+                & -0.7254 & -0.5820  & -0.8304 & -0.8659 \\
\% No high school diploma    & 0.3848  & -0.0789 & 0.3949  & -0.1724 \\
\% Living below poverty line & 0.2294  & -0.5378 & 0.0284  & -0.1615 \\
\% Unemployed                & -0.0225 & 0.3190   & 0.0455  & 0.2875 \\
\% Renter-occupied homes     & -0.8263 & 0.6574  & -0.6710  & 0.0731 \\
\% On public assistance      & 0.0359  & -0.1459 & 0.0969  & 0.1479 \\ \hline
\end{tabular}
\end{table}

The estimated FOLS and sa-OLS bases with and without the influential observations are provided in Table \ref{tbl:covid}. In the Table, we see significant changes in some of the coefficients after trimming the influential observations. As the true basis for the $\cms$ is unknown, we cannot estimate $\Delta$. Therefore, we provide some visualizations to illustrate the effects of trimming the influential observations in the estimated sufficient summary plot for some measures of the center and spread of the distributions based on the sa-OLS direction.

In Figure \ref{fig:summaryplot_covid}, we notice that trimming influential observations from the data significantly changes the slopes of the linear model fit, especially for the mean and standard deviations of the distributions. This is not surprising as the mean and standard deviations are known to be susceptible to outliers. The median and mode do not appear to be affected much by the presence of the influential observations.

\begin{figure}[htb!]
    \centering
    \includegraphics[width=\textwidth,height=3.2in]{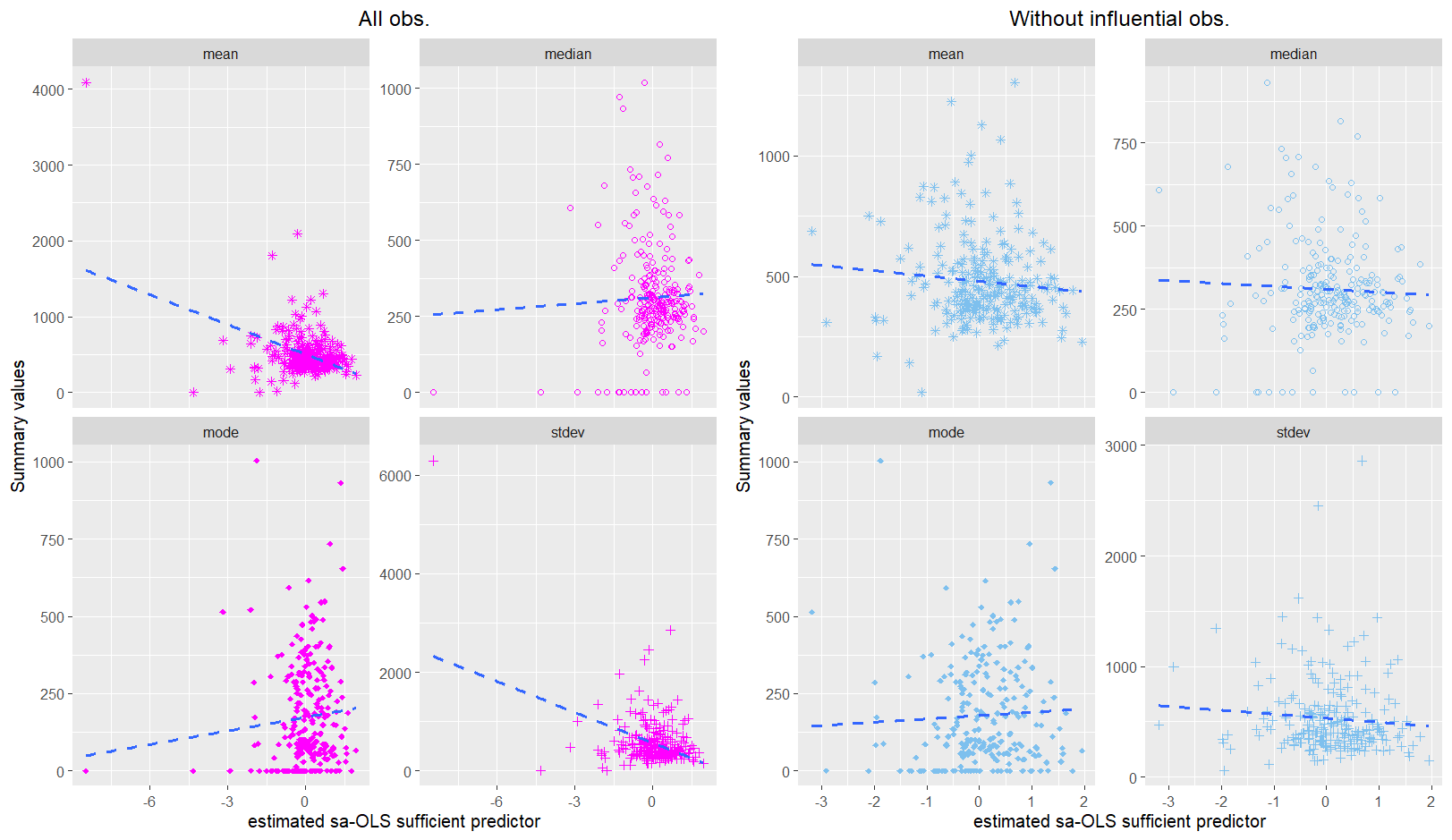}
    \caption{Summary plots for the regression between COVID-19 transmission distributions vs demographic characteristics with and without the influential observations. The blue dash lines indicate the fitted lines for the simple linear regression.}
    \label{fig:summaryplot_covid}
\end{figure}

\subsection{Human brain structural connectivity}
For this application, the weighted connectivity matrices among 90 cortical regions of interests in the brain of the study subjects are taken as the responses. The sample consists of 88 healthy individuals whose age, weight, and height are taken as the predictors. This data is also publicly available on the Open Science Framework (OSF) at \href{https://osf.io/yw5vf/}{https://osf.io/yw5vf/} and published in \cite{vskoch2022human}. 

The influential observations for the regression between the structural brain connectivity networks vs the age, height, and weight of subjects are given in Figure \ref{fig:MCD_brain}.

\begin{figure}[htb!]
    \centering
    \includegraphics[width=0.85\textwidth,height=0.4\textwidth]{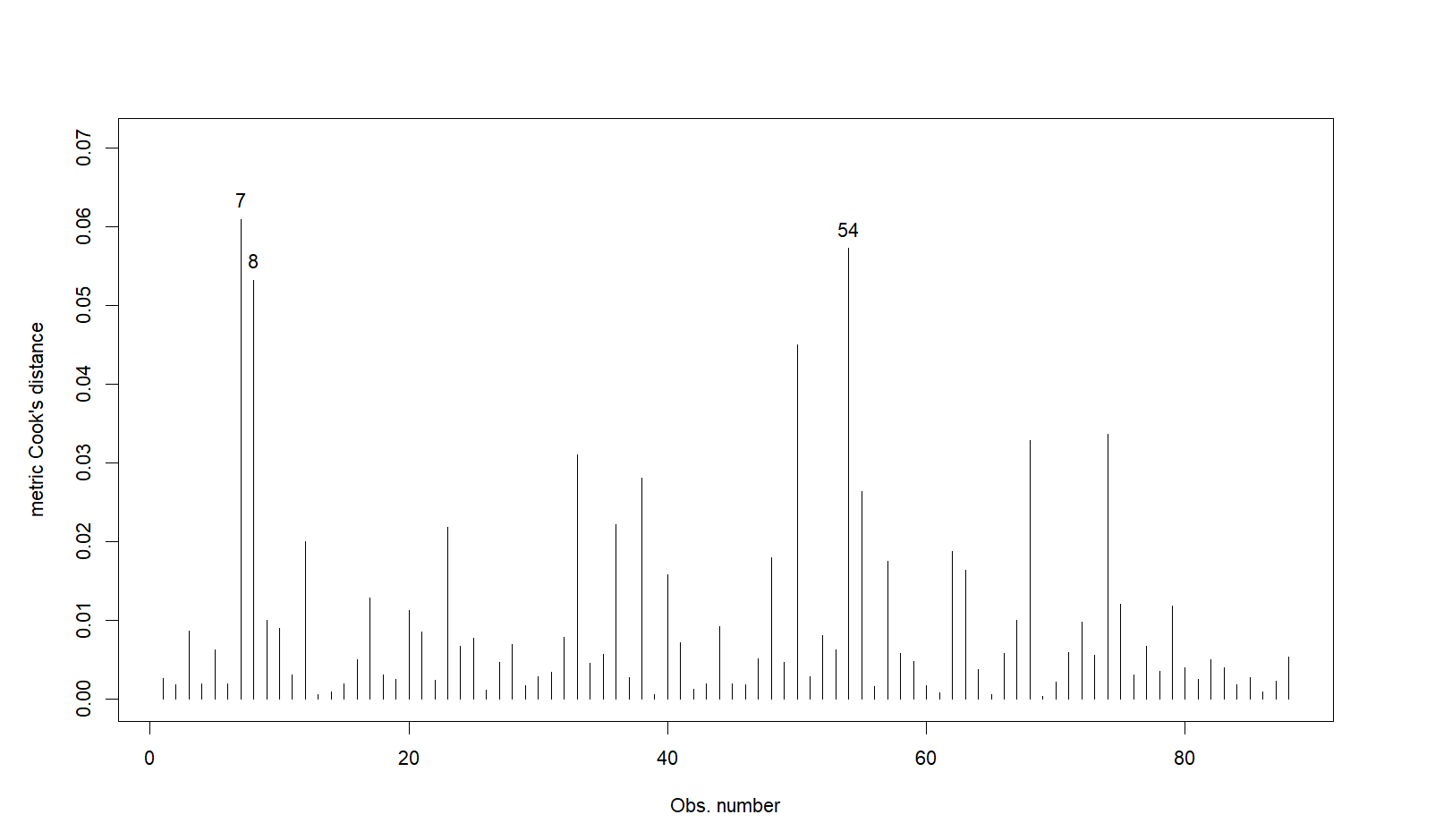}
    \caption{Influential observations based on metric Cook's distance for the regression between structural brain connectivity networks vs age, height, and weight of subjects. }
    \label{fig:MCD_brain}
\end{figure}
\noindent Based on the metric Cook's distances, Subjects 7, 8, and 54 have brain networks that differ significantly from the most of the subjects in the study. The respective ages of these outliers are 48.7, 37.7, and 45.3 years with corresponding weights 67, 62, and 60 kg. As demonstrated in Figure \ref{fig:net3}, it is hard to tell that a given brain network has an anomalous structural connectivity, especially when the number of links in the network is high. That notwithstanding, removing the influential observations leads to a significant change in the coefficient for age across all the estimators given in Table \ref{tbl:brain}, which warrants further investigation.

\begin{table}[htb!]
\small
\centering
\caption{estimated bases of the central mean space. * denotes estimates after omiting influential observations.}
\label{tbl:brain}
\begin{tabular}{lcccc}
\hline
& \multicolumn{1}{l}{sa-OLS (CD)} & \multicolumn{1}{l}{sa-OLS (CD)*} & \multicolumn{1}{l}{sa-OLS (DD)} & \multicolumn{1}{l}{sa-OLS (DD)*} \\ \hline
Age & -0.0805 & -0.1258  & -0.2234 & -0.4726 \\
Weight & -1.3446 & -1.3182 & -1.2794 & -1.1156 \\
Height & 0.6452 & 0.6116 & 1.2263 & 1.1323 \\ \hline
\end{tabular}
\end{table}

The summary plots in Figure \ref{fig:summaryplot_brain} show the relationship between some network properties and age. 

\begin{figure}[htb!]
    \centering
    \includegraphics[width=\textwidth,height=0.5\textwidth]{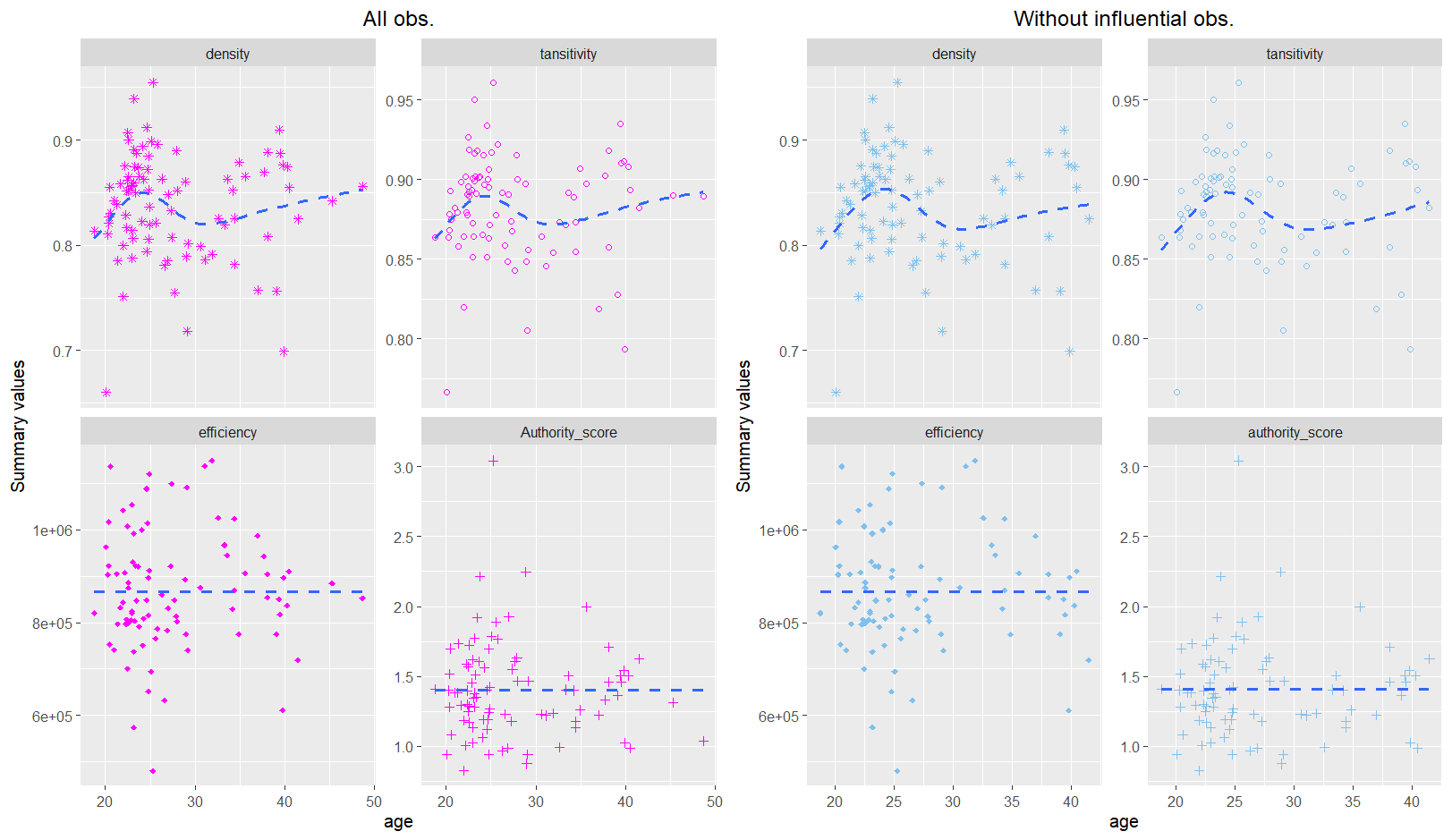}
    \caption{Summary plots for the regression between some properties of the structural brain connectivity networks vs age of subjects with and without the influential observations. The blue dash lines represent the LOESS curve.}
    \label{fig:summaryplot_brain}
\end{figure}
\noindent The pattern of the LOESS curve does not appear to vary much after trimming the influential observations. However, we notice the exclusion of the two oldest subjects, resulting in a change in limit of the horizontal axis after trimming the influential observations. 

\section{Conclusion\label{sec:5}}
In this paper, we presented a metric Cook's distance technique for detecting influential observations for regression between metric-valued responses and Euclidean predictors. As demonstrated in Example 1, even in the Euclidean space, the proposed metric Cook's distance can be very useful in regression with rank-reduced vector responses. Also, as demonstrated, influential observations can have a significant impact on the estimate of the central mean space even in the emerging random object regression. In a nutshell, the metric Cook's distance is an essential tool to aid investigators detect such unusual observations or anomalous behavior, especially when they are more subtle. 

We want to stress that rather than use a hard threshold for classifying an observation as influential, we advise investigators to examine observations with high Cook's distance values closely. Another caveat we want to point out is that trimming the influential observations may not necessarily be the best way to handle influential observations. Instead, the investigator should examine and consider the potential reasons for such occurrence. If mean regression is not the main objective, the alternative could be to estimate the central space, which encompasses the central mean space as well as dependence in higher moments. The central space estimates tend to be more robust to outliers compared to the central mean space estimates.

Moreover, while we did not investigate the impact of metric choice in this paper, we advise that investigators carefully choose metrics that capture the essential properties of the objects of interest. Finally, a future study that explores the effects of outliers beyond the single index model and in regression between non-Euclidean objects will be an interesting to extension.







\bibliographystyle{apalike}

\bibliography{ref}

\begin{thebibliography}{}

\bibitem[Austin et~al., 2024]{austin2024detection}
Austin, E., Eckley, I.~A., and Bardwell, L. (2024).
\newblock Detection of emergent anomalous structure in functional data.
\newblock {\em Technometrics}, pages 1--11.

\bibitem[CDC, 2023]{coviddata}
CDC (2023).
\newblock {United States COVID-19 County Level of Community Transmission Historical Changes}.
\newblock [Online; accessed 13-March-2023].

\bibitem[Cook, 1977]{cook1977detection}
Cook, R.~D. (1977).
\newblock Detection of influential observation in linear regression.
\newblock {\em Technometrics}, 19(1):15--18.

\bibitem[Cook, 1996]{cook1996graphics}
Cook, R.~D. (1996).
\newblock Graphics for regressions with a binary response.
\newblock {\em Journal of the American Statistical Association}, 91(435):983--992.

\bibitem[Dong and Wu, 2022]{dong2022frechet}
Dong, Y. and Wu, Y. (2022).
\newblock Fr{\'e}chet kernel sliced inverse regression.
\newblock {\em Journal of Multivariate Analysis}, 191:105032.

\bibitem[Eaton, 1986]{eaton1986characterization}
Eaton, M.~L. (1986).
\newblock A characterization of spherical distributions.
\newblock {\em Journal of Multivariate Analysis}, 20(2):272--276.

\bibitem[Faraway, 2014]{faraway2014regression}
Faraway, J.~J. (2014).
\newblock Regression for non-euclidean data using distance matrices.
\newblock {\em Journal of Applied Statistics}, 41(11):2342--2357.

\bibitem[Hammond et~al., 2013]{hammond2013graph}
Hammond, D.~K., Gur, Y., and Johnson, C.~R. (2013).
\newblock Graph diffusion distance: A difference measure for weighted graphs based on the graph laplacian exponential kernel.
\newblock In {\em 2013 IEEE global conference on signal and information processing}, pages 419--422. IEEE.

\bibitem[Kruskal, 1964]{kruskal1964multidimensional}
Kruskal, J.~B. (1964).
\newblock Multidimensional scaling by optimizing goodness of fit to a nonmetric hypothesis.
\newblock {\em Psychometrika}, 29(1):1--27.

\bibitem[Kruskal and Wish, 1978]{kruskal1978multidimensional}
Kruskal, J.~B. and Wish, M. (1978).
\newblock {\em Multidimensional scaling}.
\newblock Number~11. Sage.

\bibitem[Leger, 2016]{leger2016blockmodels}
Leger, J.-B. (2016).
\newblock Blockmodels: A r-package for estimating in latent block model and stochastic block model, with various probability functions, with or without covariates.
\newblock {\em arXiv preprint arXiv:1602.07587}.

\bibitem[Li, 2018]{li2018sufficient}
Li, B. (2018).
\newblock {\em Sufficient dimension reduction: Methods and applications with R}.
\newblock CRC Press.

\bibitem[Li and Duan, 1989]{li1989regression}
Li, K.-C. and Duan, N. (1989).
\newblock Regression analysis under link violation.
\newblock {\em The Annals of Statistics}, 17(3):1009--1052.

\bibitem[Petersen and M{\"u}ller, 2019]{petersen2019frechet}
Petersen, A. and M{\"u}ller, H.-G. (2019).
\newblock Fr{\'e}chet regression for random objects with euclidean predictors.

\bibitem[Prendergast, 2008]{prendergast2008trimming}
Prendergast, L.~A. (2008).
\newblock Trimming influential observations for improved single-index model estimated sufficient summary plots.
\newblock {\em Computational statistics \& data analysis}, 52(12):5319--5327.

\bibitem[Roy et~al., 2014]{roy2014modeling}
Roy, M., Schmid, S., and Tredan, G. (2014).
\newblock Modeling and measuring graph similarity: The case for centrality distance.
\newblock In {\em Proceedings of the 10th ACM international workshop on Foundations of mobile computing}, pages 47--52.

\bibitem[{\v{S}}koch et~al., 2022]{vskoch2022human}
{\v{S}}koch, A., Reh{\'a}k~Bu{\v{c}}kov{\'a}, B., Mare{\v{s}}, J., Tint{\v{e}}ra, J., Sanda, P., Jajcay, L., Hor{\'a}{\v{c}}ek, J., {\v{S}}paniel, F., and Hlinka, J. (2022).
\newblock Human brain structural connectivity matrices--ready for modelling.
\newblock {\em Scientific Data}, 9(1):486.

\bibitem[Soale and Dong, 2023]{soale2023data}
Soale, A.-N. and Dong, Y. (2023).
\newblock Data visualization and dimension reduction for metric-valued response regression.
\newblock {\em arXiv preprint arXiv:2310.12402}.

\bibitem[{US Census Bureau}, 2022]{acs2020}
{US Census Bureau} (2022).
\newblock {United States Census Bureau}.
\newblock \url{https://www.census.gov/data.html}.
\newblock Accessed: 2022-10-26.

\bibitem[Yin et~al., 2008]{yin2008successive}
Yin, X., Li, B., and Cook, R.~D. (2008).
\newblock Successive direction extraction for estimating the central subspace in a multiple-index regression.
\newblock {\em Journal of Multivariate Analysis}, 99(8):1733--1757.

\bibitem[Zhang et~al., 2023]{zhang2023dimension}
Zhang, Q., Xue, L., and Li, B. (2023).
\newblock Dimension reduction for fr{\'e}chet regression.
\newblock {\em Journal of the American Statistical Association}, pages 1--15.

\end{thebibliography}


\begin{thebibliography}{}

\bibitem[\protect\citeauthoryear{Campbell and Austin}{Campbell and
  Austin}{2002}]{Campbell02}
Campbell, J.~I. and S.~Austin (2002).
\newblock Effects of response time deadlines on adults' strategy choices for
  simple addition.
\newblock {\em Memory \& Cognition\/}~{\em 30\/}(6), 988--994.

\bibitem[\protect\citeauthoryear{Chi, Feltovich, and Glaser}{Chi
  et~al.}{1981}]{Chi81}
Chi, M.~T., P.~J. Feltovich, and R.~Glaser (1981).
\newblock Categorization and representation of physics problems by experts and
  novices.
\newblock {\em Cognitive science\/}~{\em 5\/}(2), 121--152.

\bibitem[\protect\citeauthoryear{Schubert, Denmark, Crandall, Grome, and
  Pappas}{Schubert et~al.}{2013}]{Schubert13}
Schubert, C.~C., T.~K. Denmark, B.~Crandall, A.~Grome, and J.~Pappas (2013).
\newblock Characterizing novice-expert differences in macrocognition: an
  exploratory study of cognitive work in the emergency department.
\newblock {\em Annals of emergency medicine\/}~{\em 61\/}(1), 96--109.

\end{thebibliography}

\section*{Appendix}
\begin{proof}[Proof of Lemma \ref{lemma:surr_sdr}]\quad \\
If $Y \indep \X | \beta^\top\X$, then $(Y, \tY) \indep \X | \beta^\top\X$ also hold, for any random copy $\tY$ of $Y$. By Theorem 2.3 of Li(2018), we have that $\phi(Y, \tY) \indep \X | \beta^\top\X$ \cite{li2018sufficient}.
\end{proof}

\begin{proof}[Proof of Theorem \ref{thrm:surr_ols}]\quad \\
By Lemma \ref{lemma:surr_sdr}, $\mS_{S^Y|\X} \subseteq \spc$. The ordinary least squares estimator  $\Sig^{-1}\Sig_{X S^Y} \in \mS_{\E(S^Y|\X)} \subseteq \mS_{S^Y|\X}$ by Theorem 8.3 of \cite{li2018sufficient}, which leads to the desired results.  
\end{proof}

\end{document}